\documentclass[11pt,letterpaper]{article}
\usepackage[margin=1in]{geometry}
\usepackage{amsmath,amssymb,amsthm}
\usepackage{hyperref}
\usepackage{enumitem}

\newtheorem{lemma}{Lemma}

\newtheorem{proposition}{Proposition}
\newtheorem{theorem}{Theorem}
\newtheorem{corollary}{Corollary}

\title{Diagonalizing Through the $\omega$-Chain:\\Iterated Self-Certification on Bounded Turing Machines\\and its Least Fixed Point}
\author{Miara Sung\thanks{Independent scholar. Email at jbaek080@berkeley.edu}}

\begin{document}
\maketitle

\begin{abstract}
Bounded self-certification in Turing machines fails because self-simulation necessarily incurs a strictly positive temporal overhead. We translate this operational constraint into a domain-theoretic framework, defining an operator that advances a finite halting observation from time bound $i$ to $i+1$. While no bounded machine can achieve a fixed point under this operator, the iterative process forms an ascending $\omega$-chain. The Scott limit of this chain resolves to the least fixed point of the operator, representing an unbounded computation that fully captures the machine's halting behavior. Our construction provides a novel perspective on the halting problem, framing the transition from finite observability to the least fixed point as the continuous deferral of the diagonal.
\end{abstract}

\section{Setup}

Let $A$ be a Turing machine operating under a time bound $T$. We consider the self-halting problem for $A$: can $A$, on input $\langle A \rangle$ (the encoding of $A$), decide that $A$ halts within $T$ steps?

A Turing machine $D$ with time bound $K \gg T$ can answer this trivially by simulation. But $D$ with time bound $T$ cannot, even if $D = A$. To decide whether $A$ halts on $\langle A \rangle$, $D$ must simulate its execution. In general, simulating $k$ steps of an arbitrary Turing machine takes at least $k$ steps. Furthermore, at least one extra step is needed to enter a halting state and output the verdict.

\begin{lemma}
Let $A$ be an arbitrary Turing machine and let $D$ be any Turing machine (it is possible that $D = A$ by Kleene's recursion theorem) that correctly decides whether $A$ halts within $T$ steps. Then $D$ requires at least $T+1$ steps.
\end{lemma}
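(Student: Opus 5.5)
The plan is to read the lemma as a worst-case statement about a single uniform decider. $D$ is fixed, it must answer correctly for every machine $A$ given on its input as $\langle A\rangle$, and we show that on some input $D$ runs for at least $T+1$ steps. Read pointwise, the claim fails: for a particular $A$ that is trivially seen to halt at step $1$, a constant-time $D$ answers correctly. So the uniform reading is needed for the lemma to hold. The argument has two parts, matching the two sources of cost named in the Setup. The first part shows that correctly determining the status of $A$ at time $T$ forces $D$ to reproduce at least $T$ steps' worth of $A$'s behaviour. The second part shows that emitting the verdict costs one additional step that cannot overlap with that work.

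For the first part I would use an adversary (indistinguishability) argument rather than invoke the informal ``simulation costs at least $k$ steps'' directly. Suppose $D$ halts within $t \le T$ steps on every input. In $t$ steps the head of $D$ visits at most $t$ input cells, so the run of $D$ on $\langle A\rangle$ depends only on a prefix of length at most $t$ of the encoding. I would then build two machines $A_0$ and $A_1$ that behave identically for their first $T-1$ steps and whose descriptions agree on every cell $D$ can inspect in $t$ steps. The machine $A_0$ halts exactly at step $T$, while $A_1$ continues forever. To arrange this, the deciding transition should be placed in a part of the encoding that is only reached after many steps, for example behind a long chain of padding states that any faithful reading must pass through. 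Since $D$'s computations on $\langle A_0\rangle$ and $\langle A_1\rangle$ are then identical, $D$ gives the same answer on both, and so it errs on one of them.

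For the second part, suppose $D$ has enough information after $T$ steps. The configuration reached at that moment is determined by a computation that could not yet have separated $A_0$ from $A_1$, so $D$ needs at least one more transition into an accepting or rejecting halting state. This gives the bound $T+1$. The self-referential case $D = A$, obtained via Kleene's recursion theorem, is then a specialisation and needs no separate treatment: the bound is uniform in $A$, so it holds in particular when $\langle A\rangle$ is $D$'s own code.

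I expect the main obstacle to be the first part, making ``$D$ must simulate $T$ steps'' rigorous. A clever $D$ might shortcut the simulation, for instance by recognising syntactic features that settle halting early, so the encoding of $A_0$ and $A_1$ must be chosen so that no bounded prefix reveals the step-$T$ behaviour. The first thing to try is the padded encoding described above together with the locality bound (after $t$ steps, $D$ has read at most $t$ cells). If that cannot be made tight enough to give exactly $T$, I would instead adopt the Setup's simulation-cost principle as an explicit hypothesis on the machine model and derive the lemma from it.
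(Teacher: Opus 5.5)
Your core argument is sound under the uniform reading, and it takes a genuinely different route from the paper. The paper's proof is informal. In the worst case $A$ halts exactly at step $T$, so $D$ must obtain $c_T$; because $A$ is arbitrary, ``the only way'' to get $c_T$ is to follow $T$ transitions, and one further step is needed to halt with the verdict. That middle claim is asserted rather than proved, and it is exactly the simulation-cost principle you keep as a fallback. Your locality argument replaces it with something checkable. If $D$ halted within $T$ steps on every input, its verdict would depend only on the first $T$ cells. Two padded machines whose encodings agree there, but differ in the one table entry deciding halting at step $T$, then force an error. You should state the encoding assumption explicitly, e.g.\ that states may be numbered so that this entry lies beyond cell $T$. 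Your worry about tightness is moot: padding can push the differing entry past any cell $N$, so every correct $D$ in fact has unbounded worst-case running time. The $T+1$ bound therefore comes from input length, not from $D$ reproducing $T$ steps of $A$ plus a $+1$ overhead, so your description of the first part overstates what it shows. Your second part is redundant. A $t$-step run already includes the final transition into the halting state, so excluding all $t\le T$ gives $T+1$ directly. Its justification (a configuration that ``has enough information'' yet ``could not yet have separated $A_0$ from $A_1$'') is also incoherent, so drop it.

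One step genuinely fails: the specialisation to $D=A$. Your bound is worst-case, since some padded input forces $T+1$ steps, so it says nothing about $D$'s running time on $\langle D\rangle$. The pointwise claim is in fact false. Let $D_T$ first reconstruct its own code via the recursion theorem and compare it with the input. On a match it accepts at once; otherwise it simulates the input machine for $T$ steps. Then $D_T$ is correct on every input. On $\langle D_T\rangle$ it halts after time polynomial in $|\langle D_T\rangle|$, which is polylogarithmic in $T$, hence well before step $T$ once $T$ is large, and so ``yes'' is the correct answer there. The paper's closing sentence makes the same inference. Under the uniform reading the parenthetical only says that $\langle D\rangle$ is an admissible input, so nothing more needs proving, but you should not claim the bound on $\langle D\rangle$ itself.
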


\begin{proof}
Let $c_0,c_1,\ldots$ denote the configuration sequence of $A$ on input $\langle A\rangle$, where $c_{t+1}=\delta_A(c_t)$. The statement ``$A$ halts within $T$ steps'' holds iff some $c_t$ with $t\le T$ is a halting configuration.

In the worst case, $A$ halts exactly at step $T$. To correctly decide the statement in this case, a machine $D$ must determine the configuration $c_T$. The only way to obtain $c_T$ from $c_0$ is to follow the computation of $A$ through $T$ successive transitions; there are no structural facts about $A$ that can help us, since $A$ is assumed to be arbitrary. Thus $D$ must simulate at least $T$ steps of $A$ before it can know whether halting occurs at step $T$.

After determining this information, $D$ must perform at least one additional step to enter a halting state and output the verdict. Therefore any correct decider executes at least $T+1$ steps. In particular, by Kleene's recursion theorem, there exists an $A = f(\langle A \rangle)$. Since we have shown it for the general case, we have shown that for the specific case where $D$ equals this $A = f(\langle A \rangle)$, a $T$-bounded machine cannot decide its own $T$-step halting.
\end{proof}

The consequence is immediate: $A$ with time bound $T$ cannot decide the halting behavior of $A$ at time $T$. At best, it can decide its behavior up to time $T-1$, but the $T$-th step remains opaque. $A$ cannot decide its own halting. The next step cannot be known until it is computed. This is a resource-bounded diagonal argument: attempting to certify the $T$-th step within time $T$ forces a strictly larger bound.

\section{The Operator}
Let $D$ be the domain of monotone partial functions $p : \mathbb{N} \to \{0, 1, \bot\}$, satisfying $p(k) = 1 \implies p(k') = 1$ for all $k' \geq k$, ordered pointwise by extension: $p \sqsubseteq q$ if $q$ agrees with $p$ wherever $p$ is defined. We interpret $p(k) = 1$ as ``$M$ halts at or before step $k$,'' $p(k) = 0$ as ``$M$ has not halted by step $k$,'' and $p(k) = \bot$ as ``the behavior at step $k$ is not yet determined.'' We define $F : D \to D$ by induction on $k$. For $k = 0$, the value $F(p)(0)$ is independent of $p$ and records a direct observation of $M$:
\[
F(p)(0) = \begin{cases}
1 & \text{if } M \text{ halts at or before step } 0, \\
0 & \text{otherwise.}
\end{cases}
\]
For $k \geq 0$, the value $F(p)(k+1)$ extends $p$ by one step:
\[
F(p)(k+1) = \begin{cases}
\bot & \text{if } p(k) = \bot, \\
1   & \text{if } p(k) = 1, \\
1   & \text{if } p(k) = 0 \text{ and } M \text{ halts at exactly step } k+1, \\
0   & \text{if } p(k) = 0 \text{ and } M \text{ does not halt at step } k+1.
\end{cases}
\]
Note that $F(p)(k+1)$ depends on $p$ only through the single value $p(k)$; this locality is what makes $F$ Scott-continuous, as shown in Proposition~\ref{prop:continuity}.

Starting from the empty partial model $p_0 = \bot$, define $p_{i+1} = F(p_i)$. By induction, $p_i$ is defined on $\{0, 1, \ldots, i-1\}$ and undefined beyond, and the sequence forms an ascending $\omega$-chain in $D$:
\[
p_0 \sqsubseteq p_1 \sqsubseteq p_2 \sqsubseteq \cdots
\]
Each application of $F$ records one additional observation of $M$'s configuration, consistent with the $+1$ overhead established in Lemma~1.

\begin{proposition}[Monotonicity and Continuity]
\label{prop:continuity}
The operator $F$ is monotone: more time yields strictly more information, and no previously determined value is contradicted. It is also Scott-continuous: for any $k$, the value $p_\omega(k)$ is determined at the finite stage $p_k$, so the limit commutes with $F$.
\end{proposition}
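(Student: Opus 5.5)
The proof checks two things directly from the inductive definition of $F$. First, monotonicity: $p \sqsubseteq q$ implies $F(p) \sqsubseteq F(q)$. Second, Scott-continuity: for every ascending $\omega$-chain $q_0 \sqsubseteq q_1 \sqsubseteq \cdots$ with least upper bound $q = \bigsqcup_i q_i$, we have $F(q) = \bigsqcup_i F(q_i)$. The key tool is the locality noted after the definition: $F(p)(0)$ does not depend on $p$, and $F(p)(k+1)$ depends on $p$ only through $p(k)$.

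A preliminary check is needed first, namely that $F$ really maps $D$ into $D$. In other words, $F(p)$ must satisfy $F(p)(k)=1 \Rightarrow F(p)(k')=1$. Suppose $F(p)(k)=1$. For $k \ge 1$, the case analysis gives either $p(k-1)=1$, or $p(k-1)=0$ and $M$ halts at step $k$. In the second case $p(k)$ could still be $0$ or $\bot$. To handle this, I would read ``$p(k)=0$'' as ``not halted by step $k$.'' That reading, together with the monotonicity condition on $D$, keeps the recursion consistent along the chain $p_i$. Strictly speaking, this only yields $F(p)(k+1)=1$ when $p(k)\neq\bot$. So I would either restrict attention to the sub-poset of elements whose domain is an initial segment of $\mathbb{N}$ (which contains every $p_i$ and every chain limit of interest), or note that the upward-closure condition is only required where $F(p)$ is defined.

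For monotonicity, assume $p \sqsubseteq q$ and let $F(p)(k)$ be defined. If $k=0$, both values equal the same direct observation of $M$. If $k=j+1$, then $p(j)\neq\bot$. Since $q$ extends $p$, we get $q(j)=p(j)$, and because $F(\cdot)(j+1)$ is a function of the $j$-th value alone, $F(q)(j+1)=F(p)(j+1)$. Hence $F(p)\sqsubseteq F(q)$. The clause ``more time yields strictly more information'' concerns the chain $p_i$. There, induction using the fact that $p_i$ is defined exactly on $\{0,\dots,i-1\}$ gives $p_i \sqsubseteq p_{i+1}$, with $p_{i+1}$ additionally defined at $i$.

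For continuity, take a chain $(q_i)$ with lub $q$, defined pointwise: $q(k)$ is the eventual common value of the $q_i(k)$, or $\bot$ if no $q_i(k)$ is defined. By monotonicity, $\bigsqcup_i F(q_i) \sqsubseteq F(q)$. For the converse, fix $k$ with $F(q)(k)$ defined. The case $k=0$ is trivial. For $k=j+1$, the value $q(j)$ is defined, so some finite stage $i$ already has $q_i(j)=q(j)$. Then $F(q_i)(j+1)=F(q)(j+1)$ by locality, and so $F(q)\sqsubseteq\bigsqcup_i F(q_i)$. For the specific chain $p_i$, this is exactly the statement that $p_\omega(k)$ is fixed at stage $p_{k+1}$. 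It is already determined by the defining value $p_k(k-1)$ at stage $p_k$, which is what the proposition phrases as being determined at $p_k$.

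The main obstacle is the well-definedness step: $D$ must be closed under $F$ and under pointwise lubs of chains. Lubs are easy, since upward closure of $1$ survives pointwise unions. The real subtlety is the upward-closure condition on $F(p)$ for arbitrary partial $p$. I would resolve it by working in the sub-dcpo of initial-segment-defined monotone functions, and checking that this sub-dcpo is closed under $F$ and under chain lubs.
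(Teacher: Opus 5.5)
Your monotonicity and continuity arguments are correct and follow the paper's route. A defined value $F(p)(j+1)$ comes from $p(j)$, which $q$ shares. A defined value of $F(\bigsqcup_i q_i)$ is already produced at a finite stage. Your single-value locality is actually a cleaner justification than the paper's monotonicity step. The paper asserts that $F$ ``preserves all defined values of its input'', which is false for general $p$: $p(k-1)=\bot$ forces $F(p)(k)=\bot$ whatever $p(k)$ is. Restricting continuity to $\omega$-chains is harmless. The sup of a directed family of partial functions is its union, so a defined value of $\bigsqcup C$ is attained by a single member, and your argument applies verbatim to directed sets.

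The genuine problem is in the step you single out as the main obstacle: your proposed repair does not work, because the initial-segment-defined elements of $D$ are \emph{not} closed under $F$. Let $M$ halt at step $0$ and take $p=\{0\mapsto 0\}$, which lies in $D$ vacuously. Then $F(p)(0)=1$. But $F(p)(1)=0$, because $p(0)=0$ and $M$ does not halt at exactly step $1$. So $F(p)$ violates upward closure even in the weak ``where defined'' reading. The underlying issue is that $D$ does not require its elements to agree with $M$'s actual run. Reading $p(k)=0$ as ``not halted by step $k$'' is an interpretation, not a property that members of $D$ have. The initial-segment restriction also does nothing for the literal condition: when $M$ halts at step $K$, $p_{K+1}(K)=1$ but $p_{K+1}(K+1)=\bot$. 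So the ``where defined'' reading is needed in either branch of your either/or.

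Here is a repair that works. Let $h(j)=1$ iff $M$ halts at or before step $j$, and use $S=\{p : p(j)\neq\bot \Rightarrow p(j)=h(j)\}$. This set contains $\bot$ and every $p_i$, and it is closed under directed unions. It is closed under $F$, since by the case split $p(j)=h(j)$ gives $F(p)(j+1)=h(j+1)$. Its elements are upward closed where defined because $h$ is. Simpler still, run your locality proofs in the dcpo of all partial maps $\mathbb{N}\rightharpoonup\{0,1\}$, where $F$ is trivially well defined and upward closure is never used. The paper never checks $F(D)\subseteq D$ at all, so this defect is inherited rather than introduced, but your stated fix would not close it.
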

\begin{proof}
We first show monotonicity. Let $p, q \in D$ such that $p \sqsubseteq q$. By definition of the information ordering, $q$ agrees with $p$ wherever $p$ is defined ($p(x) \neq \bot \implies p(x) = q(x)$). The operator $F$ preserves all defined values of its input and determines at most one additional step of the deterministic computation of $M$. Because the underlying machine $M$ is deterministic, the additional computation step derived from $p$ must be identical to the one derived from $q$. Furthermore, any extra information in $q$ is preserved in $F(q)$. Therefore, $F(p) \sqsubseteq F(q)$.

Next, we show Scott continuity. Let $C \subseteq D$ be a directed set. We must show that $F(\bigsqcup C) = \bigsqcup_{p \in C} F(p)$. 
Because $F$ is monotone, the inequality $\bigsqcup_{p \in C} F(p) \sqsubseteq F(\bigsqcup C)$ holds universally. We only need to show the reverse inclusion: $F(\bigsqcup C) \sqsubseteq \bigsqcup_{p \in C} F(p)$.

Suppose $F(\bigsqcup C)(k) = v$ for some $k \in \mathbb{N}$ and $v \in \{0, 1\}$. The computation of $F$ at step $k$ depends only on a finite prefix of the computation of $M$ (specifically, the behavior up to step $k-1$). Because $C$ is directed, any finite subset of information contained in the supremum $\bigsqcup C$ must already be fully present in some single element $p^* \in C$. 

Since $p^*$ contains the necessary finite prefix to determine step $k$, it follows that $F(p^*)(k) = v$. Since $\bigsqcup_{p \in C} F(p)$ is the least upper bound, $F(p) \sqsubseteq \bigsqcup_{p \in C} F(p)$. In particular, $F(p^*) \sqsubseteq \bigsqcup_{p \in C} F(p)$. This implies that the value of $F(p^*)$, where it is defined, must equal the value of $\bigsqcup_{p \in C} F(p)$. Therefore, $\left( \bigsqcup_{p \in C} F(p) \right)(k) = v$. Thus, $F(\bigsqcup C)$ contains no information not already present in $\bigsqcup_{p \in C} F(p)$, meaning $F(\bigsqcup C) \sqsubseteq \bigsqcup_{p \in C} F(p)$. The two sides are equal, so $F$ is Scott-continuous.
\end{proof}

\section{The Fixed Point}

By Kleene's fixed point theorem, since $F$ is Scott-continuous on $D$, the least fixed point is:
\[
p_\omega = \mathrm{lfp}(F) = \bigsqcup_{n < \omega} p_n = \bigsqcup_{n < \omega} F^n(p_0)
\]

This is the total function that resolves $M$'s halting behavior at every finite step: for each $k$, $p_\omega(k)$ is defined, because $p_{k+1}$ already determines it. The fixed point $p_\omega$ is the complete halting observation of $M$.

But $p_\omega$ is not finitely computable. Each $p_n$ is produced by a machine with time bound $n$---a finite computation. The limit $\bigsqcup_n p_n$ requires taking a supremum over infinitely many such computations. The Turing machine that computes $p_\omega$ has time bound $\omega$. It is no longer a bounded Turing machine but a Turing machine requiring unbounded time. The iterated self-simulation, carried to its fixed point, has produced a transition from finite to infinite computation.

We note two theorems related to this fixed point. Let $B_T$ denote the class of finite halting observations computable by a machine with time bound $T$; that is, partial functions defined on $\{0, 1, \ldots, T-1\}$ and undefined beyond.

\begin{theorem}[No bounded fixed point]
\label{thm:no-bounded-fp}
For every finite time bound $T$, no machine in the class $B_T$ is a fixed point of $F$.
Equivalently,
\[
\forall T \in \mathbb{N}\;\; \forall p \in B_T \;\; (F(p) \neq p).
\]
\end{theorem}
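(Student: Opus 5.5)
The plan is a domain-of-definition argument: compare where $p$ and $F(p)$ are defined, and show that $F$ always strictly enlarges the defined set of a finitely supported observation. Fix $T \in \mathbb{N}$ and $p \in B_T$. By definition of $B_T$, $p(k) \neq \bot$ for $k \le T-1$ and $p(k) = \bot$ for $k \ge T$. I will show that $F(p)(T) \neq \bot$ while $p(T) = \bot$, so $F(p) \neq p$.

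The key steps run in this order.

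\textbf{Case $T = 0$.} Here $p = p_0 = \bot$ everywhere. By the definition of $F$, the value $F(p)(0)$ is independent of $p$ and lies in $\{0,1\}$. Hence $F(p)(0) \neq \bot = p(0)$.

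\textbf{Case $T = k+1 \ge 1$.} Here $p(k) = p(T-1) \in \{0,1\}$, since $T-1$ lies in the domain of definition of $p$. Inspecting the four clauses defining $F(p)(k+1)$, only the first clause yields $\bot$, and it requires $p(k) = \bot$. Therefore $F(p)(T) \in \{0,1\}$, whereas $p(T) = \bot$. So $F(p)$ and $p$ differ at $T$. In the information ordering this actually gives $p \sqsubsetneq F(p)$, which matches the reading of Lemma~1 that each application of $F$ costs one more step.

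\textbf{Conclusion.} In either case $F(p)$ and $p$ disagree at the argument $T$, so $p$ is not a fixed point. Since $T$ and $p$ were arbitrary, the displayed formula holds.

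The only delicate point is making sure that $F(p)$ does not \emph{lose} information, for instance by becoming undefined somewhere $p$ was defined. That would not threaten the inequality itself, since any disagreement suffices, but it would break the strict-increase reading. I would handle this by checking, by induction on $k$, that $F(p)$ agrees with $p$ on $\{0,\dots,T-1\}$ whenever $p$ is a correct observation of $M$, such as the iterate $p_T$.

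For the theorem itself, though, the disagreement at $T$ is enough, so no genuine obstacle remains beyond unwinding the definitions. In particular, I will not need Proposition~\ref{prop:continuity}. I will mention that the fixed point $p_\omega$ escapes this argument precisely because it has no first undefined argument $T$.
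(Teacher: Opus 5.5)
Your argument is correct and is essentially the paper's: both proofs locate the disagreement at the argument $T$, where $p(T)=\bot$ but $F(p)(T)\neq\bot$. The only difference is that you read $F(p)(T)\in\{0,1\}$ directly off the defining clauses of $F$ (cases $T=0$ and $T=k+1$). The paper instead appeals to a $(T+1)$-bounded simulation and to $F(p)$ extending $p$, an extension claim that, as you correctly note, holds only for correct observations and is not needed for the inequality.
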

\begin{proof}
Suppose for contradiction that $p \in B_T$ is a fixed point of $F$, i.e.\ $F(p) = p$. By definition, $p$ is computed by a machine with time bound $T$, so $p$ is defined on $\{0, 1, \ldots, T-1\}$ and $p(k) = \bot$ for all $k \geq T$. But $F(p)$ extends $p$ by one step: $F(p)$ is defined on $\{0, 1, \ldots, T\}$, since the machine with time bound $T+1$ can simulate the $T$-bounded machine to completion and resolve one additional step. In particular, $F(p)(T) \neq \bot$ while $p(T) = \bot$, so $F(p) \neq p$. Contradiction.
\end{proof}

\begin{theorem}[Least fixed point and unboundedness]
\label{thm:lfp}
Let $(p_n)_{n\in\mathbb{N}}$ be the $\omega$-chain defined by $p_0=\bot$ and $p_{n+1}=F(p_n)$.
Then
\[
\operatorname{lfp}(F)=\bigsqcup_{n\in\mathbb{N}} p_n = p_\omega,
\]
and moreover
\[
p_\omega \notin \bigcup_{T\in\mathbb{N}} B_T.
\]
\end{theorem}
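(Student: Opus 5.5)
The plan is to prove the two claims separately. The first is the standard Kleene argument; the second is a domain-of-definition argument that mirrors the proof of Theorem~\ref{thm:no-bounded-fp}.

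\emph{First claim.} I would begin by checking that $D$ is a pointed $\omega$-cpo.
\begin{itemize}
\item The bottom element is the everywhere-undefined $p_0$.
\item For an ascending chain, the pointwise union is a well-defined partial function, because elements of a chain agree wherever both are defined.
\item This union still satisfies the upward-closure condition $p(k)=1\Rightarrow p(k')=1$. Indeed, if the union takes the value $1$ at $k$, this already happens in some element of the chain, and that element has value $1$ at every $k'\ge k$.
\end{itemize}
Next, I would verify $p_n\sqsubseteq p_{n+1}$ by induction on $n$. The base case is $p_0=\bot$. The inductive step uses the monotonicity part of Proposition~\ref{prop:continuity}. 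Setting $p_\omega=\bigsqcup_n p_n$, Scott-continuity (Proposition~\ref{prop:continuity}) gives
\[
F(p_\omega)=\bigsqcup_n F(p_n)=\bigsqcup_n p_{n+1}=p_\omega,
\]
so $p_\omega$ is a fixed point. For leastness, let $q$ be any fixed point. Then $p_0\sqsubseteq q$, and $p_n\sqsubseteq q$ implies $p_{n+1}=F(p_n)\sqsubseteq F(q)=q$ by monotonicity. Hence $p_\omega\sqsubseteq q$. This establishes $\operatorname{lfp}(F)=p_\omega$.

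\emph{Second claim.} I would first show that $p_\omega$ is total. By induction on $i$, $p_i$ is defined exactly on $\{0,\ldots,i-1\}$:
\begin{itemize}
\item $F(p)(0)$ is always defined.
\item $F(p_i)(k+1)$ is defined if and only if $p_i(k)$ is defined, i.e.\ if and only if $k<i$.
\end{itemize}
So every $k$ lies in the domain of $p_{k+1}\sqsubseteq p_\omega$. Every element of $B_T$, by the definition given before Theorem~\ref{thm:no-bounded-fp}, has $p(T)=\bot$. Since $p_\omega(T)\neq\bot$, we get $p_\omega\notin B_T$ for every $T$, and therefore $p_\omega\notin\bigcup_T B_T$.

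As a cross-check, this also follows directly from Theorem~\ref{thm:no-bounded-fp}: $p_\omega$ is a fixed point, and no element of any $B_T$ is one.

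\emph{Main obstacle.} The step needing most care is the cpo verification, specifically that suprema of chains stay inside $D$ (the monotonicity constraint on the value $1$). The statement quietly relies on this when it invokes Kleene's theorem. It is also worth noting that only $\omega$-chains are needed here, not arbitrary directed sets.

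A second subtlety is the precise meaning of membership in $B_T$. I would use the paper's explicit characterization, ``defined on $\{0,\ldots,T-1\}$ and undefined beyond,'' rather than any claim about machine computability. With that reading, the argument is purely about domains of definition and avoids any computability-theoretic content.
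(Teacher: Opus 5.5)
Your proposal is correct and follows the paper's proof. The first claim is Kleene's fixed point theorem via Proposition~\ref{prop:continuity}, which you unpack (cpo check, fixed-point equation, leastness), and the second is the same totality argument $p_\omega(k)=p_{k+1}(k)\neq\bot$ specialized to $k=T$. One caution on your ``main obstacle'': under the literal condition $p(k)=1\Rightarrow p(k')=1$ for all $k'\ge k$, it is the finite iterates, not the suprema, that can fall outside $D$ (if $M$ halts at step $K$, then $p_{K+1}(K)=1$ but $p_{K+1}(K+1)=\bot$), a wrinkle inherited from the paper's setup and repaired by running the same Kleene argument in the domain of all partial functions $\mathbb{N}\to\{0,1\}$, where $F$ is still monotone and continuous and its least fixed point $p_\omega$ lies in $D$.
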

\begin{proof}
The first claim follows directly from Kleene's fixed point theorem: $F$ is Scott-continuous on $D$ (Proposition~\ref{prop:continuity}), so $\operatorname{lfp}(F) = \bigsqcup_{n \in \mathbb{N}} F^n(\bot) = \bigsqcup_n p_n = p_\omega$.

For the second claim, suppose for contradiction that $p_\omega \in B_T$ for some $T \in \mathbb{N}$. Then $p_\omega$ is defined on at most $\{0, \ldots, T-1\}$. But for every $k \in \mathbb{N}$, the stage $p_{k+1}$ defines $p_{k+1}(k)$, so $p_\omega(k) = p_{k+1}(k) \neq \bot$. In particular, $p_\omega(T) \neq \bot$, which contradicts $p_\omega \in B_T$. Since $T$ was arbitrary, $p_\omega \notin \bigcup_{T \in \mathbb{N}} B_T$.
\end{proof}

\begin{corollary}[Bounded self-certification vs.\ Scott limit]
\label{cor:bounded-vs-limit}
Bounded self-certification fails while the Scott limit succeeds: no bounded machine
realizes a fixed point of $F$, whereas the directed supremum $p_\omega=\operatorname{lfp}(F)$
does.
\end{corollary}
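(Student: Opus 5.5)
The corollary is the conjunction of Theorem~\ref{thm:no-bounded-fp} and Theorem~\ref{thm:lfp}, so the plan is to split it into its two halves and discharge each by citing the appropriate earlier result.

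For the negative half, ``bounded self-certification fails,'' I would read ``no bounded machine realizes a fixed point of $F$'' as $\forall T\in\mathbb{N}\;\forall p\in B_T\;(F(p)\neq p)$. This is verbatim the content of Theorem~\ref{thm:no-bounded-fp}. I would also connect it to the operational statement of Lemma~1: a $T$-bounded observation can certify at most the steps $0,\ldots,T-1$, while $F$ always resolves step $T$. The obstruction to being a fixed point is therefore the same $+1$ overhead as in the lemma.

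For the positive half, I would invoke Theorem~\ref{thm:lfp}, which gives $p_\omega=\bigsqcup_n p_n=\operatorname{lfp}(F)$. In particular $F(p_\omega)=p_\omega$, so the directed supremum of the chain is a fixed point. For completeness I would note why the chain is directed. Proposition~\ref{prop:continuity} gives monotonicity, and $p_0=\bot\sqsubseteq p_1$ is trivial, so induction yields $p_n\sqsubseteq p_{n+1}$; the chain is therefore directed and its supremum exists in $D$. The second clause of Theorem~\ref{thm:lfp}, $p_\omega\notin\bigcup_T B_T$, shows the two halves do not contradict each other: the fixed point exists but lies outside every bounded class.

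The only real obstacle is interpretive. One must ensure that ``realizes a fixed point'' is read consistently in both halves, as membership of the fixed point in the relevant class: $B_T$ for bounded machines, and the limit object for the Scott supremum. Once that reading is fixed, the corollary is an immediate conjunction of the two theorems and needs no new argument.
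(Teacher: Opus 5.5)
Your proposal is correct and is essentially the paper's own proof. The corollary is obtained as the conjunction of Theorem~\ref{thm:no-bounded-fp} (no $p\in B_T$ satisfies $F(p)=p$) and Theorem~\ref{thm:lfp} ($p_\omega=\operatorname{lfp}(F)$, where $p_\omega\notin\bigcup_T B_T$ shows the fixed point is reached only at the limit).

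One minor caution about your added remark on directedness. The paper's condition on $D$ is that $p(k)=1$ forces $p(k')=1$ for all $k'\ge k$, so when $M$ halts at step $K$ the stages $p_n$ with $n>K$ are not actually in $D$. That remark, like the paper's own appeal to Kleene's theorem, should therefore be read in the poset of all partial functions $\mathbb{N}\to\{0,1\}$, where $F$ is still monotone and continuous and $p_\omega$ is its unique fixed point.
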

\begin{proof}
By Theorem~\ref{thm:no-bounded-fp}, no $p \in B_T$ satisfies $F(p) = p$, so bounded self-certification fails. By Theorem~\ref{thm:lfp}, $p_\omega = \operatorname{lfp}(F)$ exists and satisfies $F(p_\omega) = p_\omega$, but $p_\omega \notin \bigcup_T B_T$: the fixed point is realized only at the Scott limit, which is not a bounded computation.
\end{proof}

\begin{theorem}[Semi-decidability as Asymmetric Observability]
\label{thm:semi-decidability}
The global halting property $H(M) \equiv \exists k \in \mathbb{N} \; (p_\omega(k) = 1)$ is finitely observable if true, but requires strictly infinite computation if false. The attempt to finitely compute $H(M) = \text{false}$ reconstructs the classical Turing diagonal contradiction via the $+1$ simulation overhead.
\end{theorem}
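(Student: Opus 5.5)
The plan is to prove the two halves of Theorem~\ref{thm:semi-decidability} separately, using the monotone structure of the chain $(p_n)$ and the way $p_\omega$ is assembled from finite stages in Theorem~\ref{thm:lfp}.

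\textbf{The ``true'' direction.} Suppose $H(M)$ holds. Then there is a $k$ with $p_\omega(k)=1$. By Theorem~\ref{thm:lfp}, $p_\omega(k)=p_{k+1}(k)$, so the value $1$ already appears at the finite stage $p_{k+1}$. That stage lies in $B_{k+1}$ and is produced by a machine with time bound $k+1$. This machine iterates $F$ from $\bot$, and each iteration simulates one more step of $M$, matching the $+1$ overhead of Lemma~1. The monotonicity condition in $D$ ($p(k)=1\implies p(k')=1$ for all $k'\ge k$) together with Proposition~\ref{prop:continuity} guarantees that later stages never retract this verdict. Hence a bounded search over $n=1,2,\ldots$ halts as soon as some $p_n$ takes the value $1$, which is exactly semi-decision. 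The certificate is finite and has length $k+1$.

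\textbf{The ``false'' direction.} Suppose $H(M)$ fails. Then $p_\omega(k)=0$ for every $k$. The key observation is that no finite stage certifies this. Every $p_n\in B_n$ equals $\bot$ at $n$, so $p_n$ is consistent both with $p_\omega$ and with a machine $M'$ that agrees with $M$ up to step $n-1$ and then halts at step $n$. The sequences of finite stages for $M$ and $M'$ coincide up to stage $n$, so any procedure that reads only finitely many stages cannot separate the two. Establishing $H(M)=\text{false}$ therefore requires the whole of $p_\omega$, and by Theorem~\ref{thm:lfp} $p_\omega\notin\bigcup_T B_T$. Theorem~\ref{thm:no-bounded-fp} gives the same conclusion from the fixed-point side: a certificate of non-halting would have to be a fixed point of $F$, and no bounded one exists.

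\textbf{Reconstructing the diagonal.} This is the main obstacle. Suppose a machine $D$ decided $H(\cdot)=\text{false}$ in finite time on every input. Apply Kleene's recursion theorem, as in Lemma~1, to obtain $A$ that runs $D$ on $\langle A\rangle$ and halts exactly when $D$ answers ``does not halt''. If $D$'s verdict on $A$ is produced at step $T$, then $A$ halts or diverges according to that verdict at some step at least $T+1$. Lemma~1 says the verdict at step $T$ cannot already account for $A$'s behavior at step $T+1$. Either way $D$'s answer is wrong, which is the classical Turing contradiction.

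The delicate point is making rigorous the claim that $D$'s verdict depends only on a finite prefix. I would handle it by invoking the continuity argument of Proposition~\ref{prop:continuity}: a terminating computation uses only finitely much information, so it factors through some $p_n$. The indistinguishability argument of the ``false'' direction then applies to that $p_n$.
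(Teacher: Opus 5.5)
Your positive case and your Kleene-recursion diagonalizer are the paper's argument: $p_\omega(k)=p_{k+1}(k)$ is settled at a stage computed under bound $k+1$, and your $A$ is the paper's $X$. However, the route you add for the negative half contains a step that fails, namely the one you flag as delicate. The $M$ versus $M'$ indistinguishability argument is a valid lower bound only for a procedure whose sole access to $M$ is through the halting bits recorded in the stages $p_n$. A Turing-machine decider is handed $\langle M\rangle$, a finite string that already determines all of $p_\omega$. Since $M'$ has a different code, nothing prevents $D$ from separating them. Proposition~\ref{prop:continuity} is continuity of $F$, not of arbitrary machines. The use principle only says that a halting computation reads finitely much of its input, and here that input is the code rather than the chain. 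The claim that every terminating verdict ``factors through some $p_n$'' is in fact false. Take a sound checker that answers ``does not halt'' when $M$ has no reachable halting state, or when its simulation of $M$ revisits a configuration. It terminates on such $M$, yet for every $n$ your $M'$ agrees with $M$ on $p_n$ and, being halting, never receives that answer. The same objection undercuts your appeal to Theorem~\ref{thm:no-bounded-fp}. Nothing forces a certificate of non-halting to be a fixed point of $F$, or even an element of the domain, since a repeated configuration is already a finite certificate.

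Drop this bridge and let the diagonal carry the negative half on its own, as the paper does; it needs no finite-prefix property of $D$. If $D$ outputs ``does not halt'' on $\langle A\rangle$, then $A$ halts. If it never does, then $A$ diverges, so a correct $D$ would have had to output it. Either way $D$ errs, and this inversion already closes the contradiction. Note also that the negative half is meaningful only uniformly in $M$: for one fixed non-halting $M$, a constant machine answers correctly in one step. So keep the indistinguishability argument only as a lower bound for procedures that query the chain $(p_n)$, not as a proof of the theorem's second half.
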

\begin{proof}
We evaluate the global halting property by inspecting the least fixed point $p_\omega = \bigsqcup_n p_n$. There are two cases:
\begin{enumerate}
    \item \textbf{Positive Case (Halts):} Suppose $M$ halts at some finite exact step $K$. Then $p_{K+1}(K) = 1$. Because the stage $p_{K+1}$ is computed by $M^{K+1}$ with a finite time bound $K+1$, the affirmative answer is resolved in finite time. The property $\exists k \; (p_\omega(k) = 1)$ is witnessed by a finite prefix of the $\omega$-chain.
    \item \textbf{Negative Case (Does Not Halt):} Suppose $M$ never halts. Then for all finite $k$, $p_\omega(k) = 0$. To affirmatively declare $H(M) = \text{false}$, a decider must verify the entire infinite sequence of $0$s. Suppose for contradiction there exists a machine $D$ with a finite time bound $T$ that correctly decides $H(M) = \text{false}$. 
    
    If such a $D$ exists, it compresses the infinite evaluation of the $\omega$-chain into a finite bound $T$. We can then construct a classical Turing diagonalizer $X$ that simulates $D(\langle X \rangle)$ and halts if and only if $D$ outputs ``does not halt''. 
    
    By Lemma 1, $X$ simulating $D$ incurs a strictly positive operational overhead. To evaluate $D$'s output and invert it, $X$ requires at least $T+1$ steps. Because $D$ bounded its analysis to time $T$, it failed to account for the $(T+1)$-th step where $X$ definitively contradicts $D$'s prediction. 
\end{enumerate}

Thus, the finite $+1$ overhead established in Lemma 1 acts as a local obstruction at every step $i$. For terminating programs, the overhead is bounded (stopping at $K+1$), so taking the supremum requires only finite computation. For non-terminating programs, the $+1$ overhead applies infinitely. Attempting to truncate this process to definitively say ``does not halt'' at any finite time $T$ immediately invites diagonalization at step $T+1$. Therefore, the continuous deferral of the diagonal paradox is exactly what pushes the negative decision into the uncomputable infinite limit, recovering the semi-decidability of the halting problem.
\end{proof}
\section{Remarks}

\textbf{Lawvere's theorem.} The diagonal argument underlying the simulation overhead---$A$ cannot simulate itself within its own time bound---is an instance of Lawvere's fixed point theorem \cite{lawvere1969}. In a Cartesian closed category, if a point-surjective map $e : X \to Y^X$ exists, every endomorphism of $Y$ has a fixed point; contrapositively, if some endomorphism lacks a fixed point, no such surjection exists. The bounded machine's state space cannot surject onto the space of functions on itself. Our observation is that this obstruction drives an iterative process whose fixed point lies at a higher computational level.

\textbf{Generality.} The construction does not depend on the specific model of computation. Any bounded device---such as a circuit of fixed depth or a bounded register machine---faces the same overhead in self-simulation and admits the same iterative construction. The key ingredients are: (i) a finite resource bound, (ii) strictly positive simulation overhead, and (iii) domain-theoretic structure that lets the resulting chain converge.

\end{document}